\documentclass{IEEEtran}

\usepackage{subfigure}
\usepackage{longtable}
\usepackage{amsmath}
\usepackage{graphicx, amssymb}
\usepackage[dvips]{epsfig}
\usepackage{color}
\usepackage{cite}

\newtheorem{theorem}{Theorem}
\newtheorem{lem}{Lemma}
\newtheorem{remark}{Remark}
\newtheorem{asum}{Assumption}
\newcommand{\proof}{\noindent \textbf{Proof}: }

\begin{document}

\title{  Matched Disturbance Rejection for Port-Hamiltonian Systems with Coupled Dynamics }

\author{ M. Reza J. Harandi, et al.
}
\maketitle

	\begin{abstract}
This paper investigates the rejection of matched disturbances generated by coupled port-Hamiltonian (PH) dynamics in previously stabilized PH systems. The disturbance dynamics are incorporated into a unified PH representation, allowing the disturbance to affect the plant through both the matched input channel and an interconnection structure. Unlike existing results that typically impose restrictive structures on the disturbance dynamics, the proposed framework accommodates a more general class of coupled PH disturbances, including nonzero interconnection and damping terms. A baseline disturbance rejection scheme is first established for known disturbance storage parameters. The framework is then extended to the case of an unknown symmetric storage matrix through online parameter estimation. Two control designs are developed under different structural conditions, with the latter relaxing the dimensional restriction imposed by the first design. The proposed methods guarantee asymptotic convergence of the plant state to the desired equilibrium, while preserving a port-Hamiltonian representation of the closed-loop dynamics under the corresponding conditions. The results generalize existing disturbance rejection approaches and broaden their applicability to coupled PH disturbance models.
Experimental results on a ... validate the effectiveness of the proposed method.

	\end{abstract}
\begin{IEEEkeywords}
Port-Hamiltonian systems, matched disturbance rejection, disturbance estimation, robust control.
\end{IEEEkeywords}
	
\section{Introduction}\label{s1}
Disturbance rejection is a fundamental problem in control engineering, particularly when external disturbances interact with the system dynamics in a structured manner. In many practical systems, disturbances cannot be adequately described as arbitrary exogenous signals or as bounded signals entering only through the plant input. Instead, they may possess their own internal dynamics and interact with the plant through physical or dynamic coupling. Such disturbances arise, for example, when the external excitation is generated by another dynamical subsystem or when the disturbance and plant exchange energy through an interconnection mechanism~\cite{park2022simultaneous,li2024coupling}. In these situations, the disturbance dynamics may contain dissipative, conservative, or temporarily energy-generating components, causing the disturbance amplitude to increase significantly over finite time intervals~\cite{li2024separation}. Thus, the disturbance state is not necessarily required to satisfy a prescribed uniform bound, although such transient growth does not imply that the disturbance diverges indefinitely. These characteristics make disturbance rejection substantially more challenging than the conventional matched-disturbance problem, where the disturbance is treated as an unknown signal entering solely through the input channel.

Port-Hamiltonian (PH) systems provide a natural framework for modeling and control of dynamical systems in which energy storage, dissipation, and interconnection mechanisms play a central role~\cite{yaghmaei2023contractive}. This framework has motivated several energy-based control methodologies, including interconnection and damping assignment passivity-based control (IDA-PBC) and proportional-integral-derivative passivity-based control (PID-PBC)~\cite{harandi2023reformulation,ortega2021pid}. Its explicit representation of energy and interconnection structures is also attractive for robust control, where stability and disturbance rejection can be analyzed using energy-based tools while preserving the physical structure of the closed-loop system~\cite{romero2025robust}. However, disturbance rejection becomes more involved when the disturbance itself is generated by a dynamical system and is coupled with the plant, since the resulting interaction introduces additional interconnection effects beyond conventional matched inputs.

Robust regulation and disturbance rejection have been extensively studied for PH systems. Early results considered robust integral control in the presence of constant input disturbances, including non-passive outputs and unmatched disturbances~\cite{ortega2012robust}. Robust regulation has also been investigated for infinite-dimensional PH systems, including boundary-controlled distributed-parameter systems~\cite{8023782}. More recently, integral IDA-PBC and related passivity-based methods have addressed constant and time-varying matched and unmatched disturbances in underactuated mechanical systems, including settings with unmeasured actuator dynamics~\cite{franco2025integral}. While these results demonstrate the effectiveness of PH-based approaches for robust regulation, the disturbances are generally modeled as exogenous signals rather than dynamical systems with their own internal states and energy structure. They therefore do not explicitly address disturbances whose dynamics are coupled with those of the plant.

A more closely related line of research considers disturbances generated by dynamical systems and incorporates their internal dynamics into the control design. In particular, Ferguson et al.~\cite{ferguson2019matched,ferguson2020matched} investigated matched disturbances generated by dynamical systems and developed PH-based disturbance rejection strategies that exploit their internal structure. However, their formulation assumes no dynamic interconnection between the plant and disturbance and restricts the disturbance dynamics to a purely skew-symmetric structure. The sinusoidal disturbance with unknown frequency considered in~\cite{ferguson2020matched} further represents a specialized case with additional structural restrictions on the disturbance model. Thus, these results do not directly cover a general setting in which the disturbance has its own dissipative and interconnection effects, may exhibit temporary energy growth, and is dynamically coupled with the plant. This motivates a more general framework for matched disturbance rejection that explicitly incorporates coupled disturbance dynamics and uncertainty in their energy-storage parameters.

In this paper, we investigate robust rejection of matched disturbances generated by coupled PH dynamics. The plant and disturbance are described within a unified PH framework with a separable total Hamiltonian, while their states are coupled through an interconnection matrix. The proposed framework allows the disturbance dynamics to contain both interconnection and dissipation effects and does not require the disturbance state to satisfy a prescribed uniform bound or converge to zero. Instead, the disturbance may exhibit transient growth over finite time intervals, while the plant state is guaranteed to asymptotically converge to the desired equilibrium. We first establish a baseline result for a known disturbance storage matrix and then develop two extensions that estimate an unknown storage matrix online under different structural conditions. In the resulting formulations, the closed-loop dynamics admit PH representations under the corresponding assumptions, preserving the energy-based structure used for stability analysis.

The main contributions of this paper are summarized as follows:
\begin{itemize}
	\item A generalized PH framework is developed for rejecting matched disturbances generated by dynamically coupled systems, explicitly accounting for both the matched disturbance input and the interconnection between the plant and disturbance dynamics.
	
	\item A disturbance rejection strategy is established for a  general class of dissipative disturbance dynamics, allowing transient growth over finite time intervals without requiring the disturbance state to satisfy a prescribed uniform bound, while guaranteeing asymptotic convergence of the plant state.
	
	\item Two online estimation mechanisms are developed to relax the assumption of a known disturbance storage matrix, with stability guarantees established under different structural conditions on the disturbance dynamics.
	
	\item The proposed control and estimation schemes preserve, under the corresponding assumptions, a PH representation of the closed-loop dynamics, providing a unified energy-based framework for stability analysis.
\end{itemize}

\textbf{Notation}.
For a differentiable function $H:\mathbb{R}^n\to\mathbb{R}$, its gradient with respect to $x$ is denoted by $\nabla_x H$. When the function depends only on $x$, the subscript $x$ is omitted for simplicity, i.e., $\nabla H\equiv\nabla_x H$. For a symmetric matrix $A\in\mathbb{R}^{n\times n}$, the notation $A\succ0$ and $A\succeq0$ indicates that $A$ is positive definite and positive semi-definite, respectively. The weighted quadratic norm is defined as
$|x|_A^2:=x^\top A x.$
For a matrix $B\in\mathbb{R}^{n\times m}$, the term \emph{full rank} means that $B$ has maximal possible rank, i.e.,
$\operatorname{rank}(B)=\min\{n,m\}.$ The range space and null space of $B$ are denoted by $\mathcal{R}(B)$ and $\operatorname{null}(B)$, respectively.  Unless otherwise stated, the dimensions and functional dependencies of variables and matrices are specified when they are first introduced.   The estimation of $*$ is denoted by $\hat{*}$ and $\tilde{*}=\hat{*}-*$ is estimation error.
\section{Main Results}\label{s2}
Matched disturbances with internal dynamics arise naturally in systems where an external disturbance is generated by, or dynamically coupled to, another physical process. Examples include mechanical and electromechanical systems subject to oscillatory loads, flexible structures interacting with unmodeled or external vibration modes, and motion-control systems affected by disturbances generated by auxiliary dynamic subsystems. In such settings, the disturbance is not necessarily an arbitrary exogenous signal, but may evolve according to its own energy-related dynamics and interact with the plant through physical coupling mechanisms. From a port-Hamiltonian perspective, these interactions can be represented through interconnection structures, while the disturbance may simultaneously enter the plant through an input port. This motivates considering a unified model in which the disturbance possesses internal dynamics and is coupled to the plant, rather than treating it solely as an unknown external input. The framework developed in this paper is intended to capture this broader class of disturbance scenarios while remaining sufficiently general to accommodate different physical realizations and disturbance-generation mechanisms.

\subsection{Problem formulation}
Consider the following port-Hamiltonian system
\begin{equation}\label{dyn}
	\begin{array}{c}
		\begin{bmatrix}
			\dot{x} \\ \dot{\omega}
		\end{bmatrix}
=\begin{bmatrix}
	F_1(x) & J(x) \\ -J^\top(x) & F_2(x)
\end{bmatrix}\begin{bmatrix}
\nabla_x H(x,\omega) \\ \nabla_\omega H(x,\omega)
\end{bmatrix}
 +\begin{bmatrix}
 	G(x)(u-d) \\ 0
 \end{bmatrix} \\
F_1(x):=J_1(x)-R_1(x),\qquad F_2:=J_2(x)-R_2(x) \\
H(x,\omega)=H_1(x)+H_2(\omega)=H_1(x)+\frac{1}{2}\|\omega\|_{L_m}^2\\
d=G_\omega^\top\nabla_\omega H_{e}(\omega)
	\end{array}
\end{equation}
where $x\in\mathbb{R}^n$, $u\in\mathbb{R}^m$, $d\in\mathbb{R}^m$, and $\omega\in\mathbb{R}^{n'}$ denote the system state, control input, matched disturbance, and the state of the disturbance dynamics, respectively. Furthermore, $H_1(x)$ is the Hamiltonian function of the plant and attains its minimum at the desired equilibrium $x^*$. The matrix $G(x)\in\mathbb{R}^{n\times m}$ denotes the full column rank input mapping matrix, while $J_1(x)\in\mathbb{R}^{n\times n}$ and $R_1(x)\in\mathbb{R}^{n\times n}$ are skew-symmetric and positive semi-definite matrices, respectively. Moreover, $J(x)\in\mathbb{R}^{n\times n'}$ denotes the interconnection matrix between the plant and the disturbance dynamics, $J_2(x)=-J_2^\top(x)\in\mathbb{R}^{n'\times n'}$ and $R_2(x)\succeq0$. Besides, $L_m^\top=L_m\succ0$, and the full rank matrix $G_\omega\in\mathbb{R}^{n'\times m}$ are constant.
The plant and the disturbance dynamics are represented within a unified PH framework. The disturbance affects the plant both through the matched input channel and through the interconnection structure, while the overall Hamiltonian remains separable as $H(x,\omega)=H_1(x)+H_2(\omega)$.
Throughout the paper, the unperturbed dynamics
\[
\dot{x}=F_1(x)\nabla H_1(x)
\]
are assumed to represent the closed-loop dynamics of a previously stabilized system obtained using an energy-based controller, such as IDA-PBC~\cite{harandi2023reformulation} or PID-PBC~\cite{ortega2021pid}. Consequently, the desired equilibrium $x^*$ is assumed to be asymptotically stable in the absence of disturbances.
\begin{asum}\label{as1}
	Consider the unperturbed system (\ref{dyn}) with $u=d=0$, i.e.,
	\[
	\dot{x}=F_1(x)\nabla H_1(x).
	\]
	Then, the equilibrium point $x^*$ is (globally) asymptotically stable.\hfill $\square$
\end{asum}
	A sufficient condition for this assumption is obtained by selecting
	$H_1(x)$ as a Lyapunov function, whose time derivative satisfies
	\begin{align*}
		\dot{H}_1
		=
		(\nabla H_1)^\top
		F_1
		\nabla H_1
		=
		-(\nabla H_1)^\top
		R_1
		\nabla H_1.
	\end{align*}
	Accordingly, Assumption~\ref{as1} holds if either $R_1$ is positive definite or the passive output $G^\top\nabla H_1$ is detectable~\cite{franco2025integral}.

We first consider the disturbance rejection problem for system~(\ref{dyn}) assuming that the disturbance dynamics are completely known. This result serves as the baseline for the subsequent developments, where the assumption of complete model knowledge is progressively relaxed. The objective is to design a control law that guarantees asymptotic stability of the equilibrium point $x^*$ in the presence of the matched disturbance.
\subsection{Known disturbance storage matrix}
\begin{lem}\label{lm1}
Consider system~(\ref{dyn}) under \textit{Assumption~\ref{as1}}. The controller and disturbance estimator are
		\begin{subequations}\label{con}
		\begin{align}
	u&=G_\omega^\top {L}_m\hat{\omega} \label{u}\\
	\dot{\hat{\omega}}&=F_2{L}_m\hat{\omega}-J^\top\nabla_x H-G_\omega G^\top\nabla_x H \label{om}. 
			\end{align}
\end{subequations}
 Then, the equilibrium point $x^*$ of the closed-loop system is (globally) asymptotically stable. $\hfill\square$
\end{lem}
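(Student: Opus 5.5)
The plan is to pass to estimation-error coordinates, show that the closed loop is again port-Hamiltonian in the augmented state $(x,\omega,\tilde\omega)$ with a nonincreasing total energy, and then conclude by a LaSalle-type argument. I will use $\tilde\omega=\hat\omega-\omega$ and read the disturbance as generated by the storage $H_2$, i.e. $\nabla_\omega H_e(\omega)=L_m\omega$, so that $d=G_\omega^\top L_m\omega$. Separability of $H$ gives $\nabla_x H=\nabla H_1$ and $\nabla_\omega H=L_m\omega$. Substituting (\ref{u}) gives $u-d=G_\omega^\top L_m\tilde\omega$. The $\omega$-equation in (\ref{dyn}) is $\dot\omega=-J^\top\nabla H_1+F_2L_m\omega$, and subtracting it from (\ref{om}) makes the $J^\top\nabla H_1$ terms cancel:
\begin{equation*}
\dot{\tilde\omega}=F_2L_m\tilde\omega-G_\omega G^\top\nabla H_1 .
\end{equation*}
The augmented closed loop in $(x,\omega,\tilde\omega)$ therefore has interconnection/damping matrix
\begin{equation*}
\begin{bmatrix} F_1 & J & GG_\omega^\top\\ -J^\top & F_2 & 0\\ -G_\omega G^\top & 0 & F_2\end{bmatrix}
\end{equation*}
acting on the gradient of $W(x,\omega,\tilde\omega)=H_1(x)+\tfrac12|\omega|_{L_m}^2+\tfrac12|\tilde\omega|_{L_m}^2$. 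The off-diagonal blocks are skew-symmetric pairs and the symmetric part is $-\mathrm{diag}(R_1,R_2,R_2)\preceq0$. This already exhibits the PH structure claimed in the contributions.

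Next I would compute $\dot W$. The cross terms $\nabla H_1^\top JL_m\omega$ and $\nabla H_1^\top GG_\omega^\top L_m\tilde\omega$ cancel against the corresponding terms from the $\omega$ and $\tilde\omega$ equations, leaving
\begin{equation*}
\dot W=-|\nabla H_1|_{R_1}^2-|L_m\omega|_{R_2}^2-|L_m\tilde\omega|_{R_2}^2\le 0 .
\end{equation*}
Since $H_1$ has its minimum at $x^*$ and $L_m\succ0$, $W$ is positive definite with respect to $(x^*,0,0)$ up to a constant. This gives Lyapunov stability and boundedness of trajectories. It also yields the ``no uniform bound'' feature described in the Introduction: $\omega$ is bounded by the initial total energy rather than by any a priori bound, and $H_2$ alone may grow transiently, fed through $J$.

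The main obstacle is attractivity of $x^*$ in the $x$-component, since $\omega$ need not converge. I would apply LaSalle's invariance principle to the bounded trajectories and study the largest invariant set contained in $\{\dot W=0\}$. There $R_1\nabla H_1=0$, $R_2L_m\omega=0$ and $R_2L_m\tilde\omega=0$, so the $x$-dynamics reduce to $\dot x=J_1\nabla H_1+JL_m\omega+GG_\omega^\top L_m\tilde\omega$, which is driven by the disturbance states. My first attempt is to argue, via Assumption~\ref{as1} in its detectability form (the passive output $G^\top\nabla H_1$ detectable, as remarked after the assumption), that on this invariant set $\nabla H_1\equiv0$, hence $x=x^*$. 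This would use the fact that $W$ is constant there, so the $x$-subsystem cannot exchange energy with the lossless disturbance part. If this step does not close in full generality, I would state the needed extra structural condition explicitly, for instance that $JL_m\omega+GG_\omega^\top L_m\tilde\omega\equiv0$ on the invariant set forces $\omega,\tilde\omega$ into $\operatorname{null}(JL_m)$ and $\operatorname{null}(GG_\omega^\top L_m)$ respectively. The unperturbed dynamics of Assumption~\ref{as1} then govern $x$ and give $x\to x^*$. Global radial unboundedness of $H_1$ upgrades the conclusion to the global case.
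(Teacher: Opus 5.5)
Up to the formula for $\dot W$ your argument is exactly the paper's proof: same error coordinate, same interconnection matrix, same Hamiltonian. The paper then simply asserts that convergence of $x$ ``follows directly from Assumption~\ref{as1}''. You rightly single this out as the crux, but your detectability route cannot close it, because under Assumption~\ref{as1} alone the conclusion is false.

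Take $x=(q,p)$, $H_1=\frac12(q^2+p^2)$, $J_1=\begin{bmatrix}0&1\\-1&0\end{bmatrix}$, $R_1=\mathrm{diag}(0,r)$ with $r>0$, and $G=[0\ \ 1]^\top$. The unperturbed plant is then a damped oscillator, globally asymptotically stable, with detectable passive output $p$. Let the disturbance be constant: $n'=m=1$, $J=0$, $F_2=0$, $L_m=G_\omega=1$. Then (\ref{con}) reads $u=\hat\omega$, $\dot{\hat\omega}=-p$, and the closed loop is $\dot q=p$, $\dot p=-q-rp+\tilde\omega$, $\dot{\tilde\omega}=-p$. Here $q+\tilde\omega$ is conserved, and every point with $p=0$, $q=\tilde\omega$ is an equilibrium. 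So from $q(0)=p(0)=0$, $\tilde\omega(0)=\varepsilon$ one gets $q(t)\to\varepsilon/2\neq0$, and $x^*$ is not even locally attractive.

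Your heuristic that the $x$-subsystem ``cannot exchange energy'' with the disturbance part fails on two counts. First, constancy of $W$ fixes only the total energy, so $H_1$ and $H_2$ may still trade energy through the lossless coupling $J$. Second, even when the port $GG_\omega^\top$ carries zero power (as it does on $\{G^\top\nabla H_1=0\}$), a persistent $\tilde\omega$ is a nonzero matched input that moves the rest point of $x$. That is exactly what happens in the example: integral action on a velocity-type passive output does not remove a constant matched disturbance.

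Your fallback does not repair this as a proof of the stated lemma. Requiring $JL_m\omega+GG_\omega^\top L_m\tilde\omega\equiv0$ on the invariant set is not a condition on the system data; it is the LaSalle conclusion itself taken as a hypothesis. The example shows that some such additional hypothesis is unavoidable.

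A checkable strengthening that makes your argument complete is $R_1(x)\succeq\rho I$ with $\rho>0$, the first sufficient condition listed after Assumption~\ref{as1}. Then $\dot W\le-\rho\|\nabla H_1\|^2$, trajectories are bounded, and $\nabla H_1(x(t))$ is square integrable and uniformly continuous. Barbalat's lemma gives $\nabla H_1(x(t))\to0$, hence $x\to x^*$, provided $x^*$ is the only critical point of $H_1$ in the relevant sublevel set of $W$ (globally if $H_1$ is radially unbounded). Under mere detectability of $G^\top\nabla H_1$, both the lemma and the paper's one-line conclusion need an extra hypothesis of the type you describe, imposed on the augmented closed loop rather than on the unperturbed plant.
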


\proof
The proof follows by showing that the resulting closed-loop dynamics admit a PH representation
which enables the stability analysis to be carried out within the standard energy-based framework
 in the following form
\begin{align}\label{cl}
	\begin{bmatrix}
		\dot{x} \\ \dot{{\omega}} \\ \dot{\tilde{\omega}}
	\end{bmatrix}
	=\begin{bmatrix}
		F_1(x) & J  & GG_\omega^\top \\
		-J^\top & F_2 & 0 \\
		-G_\omega G^\top & 0 & F_2 \\
	\end{bmatrix}
	\begin{bmatrix}
		\nabla_x \bar{H} \\ \nabla_{{\omega}}\bar{H}\\ \nabla_{\tilde{\omega}}\bar{H}
	\end{bmatrix},
\end{align}
where 
\begin{align*}
	\bar{H}(x,{\omega},\tilde{\omega})=H_1(x)+H_2(\omega)+\frac{1}{2}\|\tilde{\omega}\|_{L_m}^2,
\end{align*}
and $\tilde{\omega}=\hat{\omega}-\omega$.
Substituting (\ref{u1}) into the plant dynamics yields
\begin{align*}
	\dot{x}&=F_1(x)\nabla_x H +J\nabla_\omega H +GG_\omega^\top {L}_m\hat{\omega}-GG_\omega^\top L_m\omega\\&=
		\nabla_x \bar{H}+ J	\nabla_\omega
		 \bar{H}+GG_\omega^\top\nabla_{\tilde{\omega}}\bar{H}.
\end{align*}
The estimation error dynamics are obtained from (\ref{dyn}) and (\ref{om}) as
\begin{align*}
\dot{\tilde{\omega}}&=\dot{\hat{\omega}}-\dot{{\omega}}=	F_2{L}_m\hat{\omega}-J^\top\nabla_x H-G_\omega G^\top\nabla_x H +J^\top\nabla_x H\\&-F_2 L_m\omega =-G_\omega G^\top\nabla_x \bar{H} +F_2 L_m\tilde{\omega} =-G_\omega G^\top\nabla_x \bar{H} \\&+F_2\nabla_{\tilde{\omega}}\bar{H}.
\end{align*}
The time derivative of $\bar{H}$ along the closed-loop trajectories is
 \begin{align*}
  \dot{\bar{H}}&=(\nabla_x \bar{H})^\top F_1\nabla_x \bar{H}+(\nabla_\omega \bar{H})^\top F_2\nabla_\omega \bar{H}\\&+(\nabla_{\tilde{\omega}} \bar{H})^\top F_2\nabla_{\tilde{\omega}} \bar{H}\leq -(\nabla_x \bar{H})^\top R_1\nabla_x \bar{H}.
\end{align*}
The desired stability result follows directly from Assumption~\ref{as1}. $\hfill\blacksquare$
   
The closed-loop dynamics are naturally expressed in terms of the variables
$(x,\omega,\hat{\omega})$. By introducing the diffeomorphic coordinate
transformation
$\tilde{\omega}=\hat{\omega}-\omega$,
the closed-loop system can be equivalently represented in the PH form
(\ref{cl}).

\textit{Lemma~\ref{lm1}} establishes a basic disturbance rejection scheme under the assumption that the disturbance dynamics are dissipative. Since the primary objective there is stabilization, no damping is injected into the estimator dynamics. This observation motivates the following extension, where damping injection is incorporated to relax the dissipativity requirement on $R_2$ while preserving the PH structure of the closed-loop system. The following assumption is first introduced.
\begin{asum}\label{as2}
\textbf{a)} The interconnection matrix satisfies $J(x)=G(x)J_n(x)$ where $J_n\in\mathbb{R}^{m\times n'}$ is an arbitrary full rank matrix. 

\textbf{b)} $R_2$ is in the following form
	$$R_2=R_{2_n}-G_\omega \mathfrak{R}G_\omega^\top,$$
where $R_{2_n}$ is positive semi-definite, whereas
$\mathfrak{R}\in\mathbb{R}^{m\times m}$ is not necessarily negative semi-definite and may even be positive definite.
The motivation and interpretation of this decomposition are discussed in Remark~\ref{re2}. \hfill $\square$
\end{asum}
\begin{theorem}\label{th1}
Consider system~(\ref{dyn}) under Assumptions~\ref{as1} and~\ref{as2}. The controller and disturbance estimator are given by
			\begin{subequations}\label{con1}
		\begin{align}
			u&=G_\omega^\top {L}_m\hat{\omega}-J_n L_m\hat{\omega} \label{u1}\\
			{\hat{\omega}}&=x_a-G_\omega G^\top K x  \label{om1}\\
			\dot{x}_a&= F_2 L_m\hat{\omega}-G_\omega G^\top\nabla_x H\nonumber\\&+G_\omega G^\top KF_1\nabla_x H+G_\omega \dot{G}^\top K x,\nonumber 
		\end{align}
	\end{subequations}
	where $K\in\mathbb{R}^{n\times n}$ is a positive definite gain matrix, and $\dot{G}=dG/dt$. Then, the equilibrium point $x^*$ is (globally) asymptotically stable provided that
	\begin{align}\label{k}
		R_2-\operatorname{sym}\left\{
		G_\omega G^\top K(GJ_n-GG_\omega^\top)
		\right\}\succeq0,
	\end{align}
	where
	\[
	\operatorname{sym}(A):=\frac{1}{2}(A+A^\top),
	\]
	denotes the symmetric part of a square matrix $A$.
 $\hfill\square$
\end{theorem}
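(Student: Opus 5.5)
The plan is to mimic the proof of Lemma~\ref{lm1}: pass to the coordinates $(x,\omega,\tilde\omega)$ with $\tilde\omega=\hat\omega-\omega$, show that the closed loop takes a PH-like form with the same Hamiltonian $\bar H=H_1(x)+H_2(\omega)+\tfrac12\|\tilde\omega\|_{L_m}^2$, and then invoke Assumption~\ref{as1} through a LaSalle-type argument on $\dot{\bar H}$.

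\textbf{Plant dynamics.} Substituting (\ref{u1}) and $d=G_\omega^\top L_m\omega$ into (\ref{dyn}), and using $J=GJ_n$ from Assumption~\ref{as2}a, the $J_nL_m\omega$ contribution of the interconnection combines with $-GJ_nL_m\hat\omega$. This gives
\begin{align*}
\dot x=F_1\nabla_x\bar H+J\nabla_\omega\bar H+(GG_\omega^\top-GJ_n)\nabla_{\tilde\omega}\bar H .
\end{align*}

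\textbf{Estimator dynamics.} I differentiate (\ref{om1}): $\dot{\hat\omega}=\dot x_a-G_\omega\dot G^\top Kx-G_\omega G^\top K\dot x$. The terms $G_\omega\dot G^\top Kx$ cancel, which is exactly why they were placed in $\dot x_a$. The term $G_\omega G^\top KF_1\nabla_xH$ cancels against the $F_1\nabla_x H$ part of $G_\omega G^\top K\dot x$. Subtracting $\dot\omega=-J^\top\nabla_xH+F_2L_m\omega$ and writing $J^\top=J_n^\top G^\top$, I expect
\begin{align*}
\dot{\tilde\omega}=-(GG_\omega^\top-GJ_n)^\top\nabla_x\bar H+\big(F_2+G_\omega G^\top K(GJ_n-GG_\omega^\top)\big)\nabla_{\tilde\omega}\bar H .
\end{align*}
The $x$--$\tilde\omega$ coupling blocks are therefore skew-adjoint. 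Consequently the closed-loop matrix has skew off-diagonal structure, with diagonal blocks $F_1$, $F_2$, and the modified block $F_2+G_\omega G^\top K(GJ_n-GG_\omega^\top)$. One term needs care: the $G_\omega G^\top K\,J\nabla_\omega H$ contribution from $G_\omega G^\top K\dot x$. I plan to check that it is compensated by the $J^\top$ and $J_n$ cancellations. If it is not, I will absorb it into the $\omega$--$\tilde\omega$ off-diagonal block and verify skewness there as well.

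\textbf{Energy balance.} Computing $\dot{\bar H}$ gives
\begin{align*}
\dot{\bar H}=-|\nabla_x\bar H|^2_{R_1}-|\nabla_\omega\bar H|^2_{R_2}-\nabla_{\tilde\omega}\bar H^\top\Big(R_2-\operatorname{sym}\{G_\omega G^\top K(GJ_n-GG_\omega^\top)\}\Big)\nabla_{\tilde\omega}\bar H .
\end{align*}
Condition (\ref{k}) makes the last term nonpositive. The injected damping $K$ is what allows (\ref{k}) to hold even when $\mathfrak R$ in Assumption~\ref{as2}b makes $R_2$ indefinite along $\mathcal R(G_\omega)$.

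\textbf{Main obstacle.} The difficulty is the middle term $-|\nabla_\omega\bar H|^2_{R_2}$. When $R_2$ is not positive semi-definite, the disturbance energy $H_2$ may grow, so $\bar H$ is not a Lyapunov function for the full state. I would first try to drop $H_2$ and work with $W=H_1(x)+\tfrac12\|\tilde\omega\|^2_{L_m}$, since $\dot W$ involves only the $R_1$ and (\ref{k}) terms plus the coupling term $\nabla_x H_1^\top J\nabla_\omega H$. That coupling term equals $\nabla H_1^\top GJ_nL_m\omega$, and I would try to cancel it by exploiting the structure $J=GJ_n$ together with the sign of $\mathfrak R$. Failing this, I would treat the $\omega$-subsystem as a bounded-on-finite-intervals exogenous input, consistent with the paper's allowance of transient growth. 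I would then conclude $\nabla_xH_1\to0$ and $x\to x^*$ via Assumption~\ref{as1} and a Barbalat or LaSalle argument. This is the step whose justification I expect to require the most care.
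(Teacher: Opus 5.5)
There is a genuine gap, and it starts with an algebra slip in your plant equation. Under $J=GJ_n$, the interconnection term and the new control term combine as $J\nabla_\omega H-GJ_nL_m\hat\omega=GJ_nL_m(\omega-\hat\omega)=-JL_m\tilde\omega$. This is already the $-GJ_n\nabla_{\tilde\omega}\bar H$ part of your last term, so keeping $J\nabla_\omega\bar H$ as well double-counts it. The correct closed loop is
\begin{align*}
\dot x=F_1\nabla H_1+(GG_\omega^\top-J)L_m\tilde\omega ,
\end{align*}
which does not depend on $\omega$ at all. The term $-J_nL_m\hat\omega$ in (\ref{u1}) is there precisely to cancel the matched interconnection up to the estimation error.

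Everything downstream of the slip is affected:
\begin{itemize}
\item The ``term needing care'' $G_\omega G^\top KJ\nabla_\omega H$ does not exist. Your stated $\dot{\tilde\omega}$ is correct only because it implicitly uses the right $\dot x$.
\item The triple $(x,\omega,\tilde\omega)$ is not in PH form with respect to $H_1+H_2+\tfrac12\|\tilde\omega\|^2_{L_m}$. The equation for $\dot\omega$ still contains $-J^\top\nabla_xH$, while $\dot x$ has no matching $+J\nabla_\omega H$. Your energy balance therefore omits the uncancelled cross term $-(L_m\omega)^\top J^\top\nabla H_1$.
\item The coupling term $\nabla H_1^\top GJ_nL_m\omega$ that you try to remove from $\dot W$ is an artifact of the slip.
\end{itemize}
Your proposed fallbacks would not have closed the proof anyway. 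No sign condition on $\mathfrak R$ can cancel a term that is bilinear in $\nabla H_1$ and $\omega$. Treating $\omega$ as a bounded input is not available either, since the theorem is meant to allow unbounded $\omega$ (Remark~\ref{re2}).

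The missing step is to compute $\dot x$ correctly and then follow your own first idea. The pair $(x,\tilde\omega)$ is a closed subsystem (recall $F_2=F_2(x)$) of the PH form (\ref{cl1}), with Hamiltonian $W=H_1(x)+\tfrac12\|\tilde\omega\|^2_{L_m}$. This is exactly the paper's $\bar H$, which deliberately omits $H_2$. The off-diagonal blocks $GG_\omega^\top-J$ and $J^\top-G_\omega G^\top$ are negative transposes of each other, so
\begin{align*}
\dot W=-|\nabla H_1|^2_{R_1}-(L_m\tilde\omega)^\top\big(R_2-\operatorname{sym}\{G_\omega G^\top K(GJ_n-GG_\omega^\top)\}\big)L_m\tilde\omega\le-|\nabla H_1|^2_{R_1}
\end{align*}
by (\ref{k}). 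The disturbance state $\omega$, and hence the term $|\nabla_\omega\bar H|^2_{R_2}$ you were worried about, never enters the analysis. Convergence of $x$ to $x^*$ then follows from Assumption~\ref{as1}, as in the paper.
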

A sufficient condition for satisfaction of (\ref{k}) is $J_n=\kappa G_\omega^\top$ with $\kappa\in\mathbb{R}$ and 
\begin{align*}
	\lambda_{min}\{R_2+G_\omega G^\top KGG_\omega^\top\}\geq\lambda_{max}\{G_\omega (\mathfrak{R} +\kappa G^\top KG)G_\omega^\top\}.
\end{align*}
Moreover, global asymptotic stability is guaranteed provided that Assumption~\ref{as1} holds globally and $H_1$ is radially unbounded.$\hfill\square$

\proof
To prove \textit{Theorem~\ref{th1}}, first consider the closed-loop dynamics in the error coordinates $(x,\tilde{\omega})$. These dynamics can be written as
\begin{align}\label{cl1}
	\begin{bmatrix}
		\dot{x}  \\ \dot{\tilde{\omega}}
	\end{bmatrix}
	=\begin{bmatrix}
		F_1 & GG_\omega^\top-J \\
		J^\top-G_\omega G^\top & F_2-F_\omega  \\
	\end{bmatrix}
	\begin{bmatrix}
		\nabla_x \bar{H} \\  \nabla_{\tilde{\omega}}\bar{H}
	\end{bmatrix},
\end{align}
where 
\begin{align*}
	\bar{H}(x,\tilde{\omega})=H_1(x)+\frac{1}{2}\|\tilde{\omega}\|_{L_m}^2,
\end{align*}
and
\begin{align}\label{r}
	F_\omega(x):=G_\omega G^\top K (G G_\omega^\top-J).
\end{align}
Under Assumption~\ref{as2},
\[
J=GJ_n,
\qquad
F_2=J_2-R_{2_n}+G_\omega\mathfrak{R}G_\omega^\top.
\]
Substituting the control law (\ref{u1}) into (\ref{dyn}) yields
\begin{align*}
		\dot{x}&=F_1\nabla_x H +GJ_nL_m \omega +GG_\omega^\top {L}_m\hat{\omega}-GJ_n L_m\hat{\omega}\\&-GG_\omega^\top L_m\omega=
F_1	\nabla_x \bar{H}- GJ_n L_m\tilde{\omega}+GG_\omega^\top L_m\tilde{\omega}\\&=F_1\nabla_x \bar{H}+(-J+GG_\omega^\top)
	\nabla_{\tilde{\omega}}\bar{H}.
\end{align*}
Using (\ref{om1}) and the system dynamics in (\ref{dyn}), the estimation error dynamics are obtained as
\begin{align*}
	\dot{\tilde{\omega}}&=\dot{\hat{\omega}}-\dot{{\omega}}=	F_2 L_m\hat{\omega}-G_\omega G^\top\nabla_x H+G_\omega G^\top KF_1\nabla_x H\\&+G_\omega \dot{G}^\top K x-G_\omega \dot{G}^\top K x-G_\omega G^\top K \dot{x}+J^\top\nabla_x H\\&-F_2 L_m\omega
	=J^\top\nabla_x H-G_\omega G^\top\nabla_x H+F_2L_m\tilde{\omega}\\&+G_\omega G^\top KF_1\nabla_x H -G_\omega G^\top K(F_1\nabla_x H-JL_m\tilde{\omega}\\&+GG_\omega^\top L_m\tilde{\omega})
	=(J^\top-G_\omega G^\top)\nabla_x \bar{H}+(F_2\\&+G_\omega G^\top K J-G_\omega G^\top KG G_\omega^\top)\nabla_{\tilde{\omega}}\bar{H},
\end{align*}
which is precisely the second row of (\ref{cl1}). Consider $\bar{H}$ as a Lyapunov function. Its time derivative along the closed-loop trajectories is given by
\begin{align}
	\dot{\bar{H}}&=(\nabla_x \bar{H})^\top F_1\nabla_x \bar{H}-(\nabla_{\tilde{\omega}} \bar{H})^\top\big(R_2-\operatorname{sym}\{G_\omega G^\top K \nonumber\\&(GJ_n-G G_\omega^\top)\}\big)\nabla_{\tilde{\omega}} \bar{H}\leq -(\nabla_x \bar{H})^\top R_1\nabla_x \bar{H},\label{ly}
\end{align}
where (\ref{k}) was used. Therefore, $\bar{H}$ is nonincreasing along the closed-loop trajectories. Moreover, since $\bar{H}$ is positive definite with respect to $(x^*,\tilde{\omega})=(x^*,0)$, and \textit{Assumption~\ref{as1}} guarantees asymptotic stability of $x^*$ for the unperturbed plant dynamics, the state $x$ asymptotically converges to the desired equilibrium point $x^*$.$\hfill\blacksquare$

Here, the objective is asymptotic regulation of the plant state rather than boundedness or regulation of the disturbance generator state. Besides,
by (\ref{ly}) and LaSalle's invariance principle, the term
$G_\omega^\top L_m\tilde{\omega}$ converges to zero provided that
\begin{align*}
	G_\omega \mathfrak{R}G_\omega^\top-\operatorname{sym}\left\{G_\omega G^\top K (GJ_n-G G_\omega^\top)\right\}\succ0.
\end{align*}
\begin{remark}\label{re2}
	The controller proposed in \textit{Theorem~\ref{th1}} is capable of rejecting matched disturbances whose dynamics may generate unbounded trajectories. Since the control input in (\ref{u1}) explicitly depends on the estimated disturbance state $\omega$, the control effort may also become unbounded as $d\rightarrow\infty$. Such disturbances are rarely encountered in practical engineering systems, where external disturbances are typically bounded or exhibit transient growth over finite time intervals. For instance, severe wind gusts acting on unmanned aerial vehicles or wind turbines may cause the disturbance amplitude to increase rapidly over a limited period before eventually returning to a bounded regime. Such transient behavior can, in principle, be represented by allowing the effective matrix $\mathfrak R$ to vary with time, such that some of its eigenvalues become temporarily positive over finite time intervals while remaining non-positive otherwise.
\end{remark}
\subsection{Unknown disturbance storage matrix}
\textit{Theorem~\ref{th1}} extends the result of \textit{Lemma~\ref{lm1}} by relaxing the condition imposed on $R_2$ and introducing damping injection into the estimator dynamics, thereby allowing the rejection of matched disturbances whose dynamics may generate unbounded trajectories. Next, the assumption that $L_m$ is known is relaxed by estimating it online. The resulting control law and estimation scheme are presented in the following theorem.
\begin{theorem}\label{th2}
Consider a PH system subject to a matched disturbance with coupled dynamics given by (\ref{dyn}). Suppose that Assumption~\ref{as1} holds, $J=GJ_n$, and $L_m$ is unknown. Furthermore, assume that
\[
(G_\omega-J_n^\top)(G_\omega^\top-J_n)
\]
is invertible.
The control law and the estimators for $\omega$ and $L_m$ are given by
	 		\begin{subequations}\label{con2}
	 	\begin{align}
			u&=G_\omega^\top \hat{L}_m\hat{\omega}-J_n 
			\hat{L}_m\hat{\omega} \label{u2}\\
\dot{\hat{\omega}}&= F_2 \hat{L}_m\hat{\omega}-G_\omega G^\top\nabla_x H  \label{om2}\\
	 			\hat{l}&=x_a-\hat{\Omega}_m^\top F_2 ^\top \Lambda x  \label{l} \\
	 			\dot{x}_a&=\dot{\hat{\Omega}}_m^\top F_2 ^\top \Lambda x+{\hat{\Omega}}_m^\top \dot{F_2} ^\top \Lambda x
	 		-\hat{\Omega}^\top_m (G_\omega G^\top-J_n^\top G^\top)\nabla_x H\nonumber\\& +\hat{\Omega}^\top_m F_2^\top\Lambda F_1\nabla_x H,\nonumber
	 	\end{align}
	 \end{subequations}
	 where 
	 \begin{align*}
	 	\Lambda=\big((G_\omega-J_n^\top) (G_\omega ^\top -J_n)\big)^{-1}(G_\omega-J_n^\top) (G^\top G)^{-1}G^\top,
	 \end{align*}
	$\Omega_m(\omega)\in\mathbb{R}^{n'\times n'(n'+1)/2}$ and
	$l\in\mathbb{R}^{n'(n'+1)/2}$ are defined such that
	$\Omega_ml=L_m\omega$; see Appendix~\ref{app:parametrization}
	for the detailed parametrization. Moreover, $x_a$ is an auxiliary dynamic vector. Then, the state $x$ asymptotically converges to the desired equilibrium point $x^*$.
	 \hfill$\square$
\end{theorem}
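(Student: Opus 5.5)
The plan is to mirror the proofs of \textit{Lemma~\ref{lm1}} and \textit{Theorem~\ref{th1}}. I will pass to the error coordinates $(x,\tilde{\omega},\tilde{l})$, where $\tilde{\omega}=\hat{\omega}-\omega$, $\tilde{l}=\hat{l}-l$ and $\hat{\Omega}_m:=\Omega_m(\hat{\omega})$, so that $\hat{L}_m\hat{\omega}=\hat{\Omega}_m\hat{l}$. I will then show that the closed loop is, or is dominated by, a PH-type system with Hamiltonian
\[
W(x,\tilde{\omega},\tilde{l})=H_1(x)+\tfrac{1}{2}\|\tilde{\omega}\|_{L_m}^2+\tfrac{1}{2}\|\tilde{l}\|^2 .
\]
The starting point is the linear parametrization from Appendix~\ref{app:parametrization}:
\[
\hat{\Omega}_m\hat{l}-L_m\omega=\hat{\Omega}_m\tilde{l}+(\hat{\Omega}_m-\Omega_m)l=\hat{\Omega}_m\tilde{l}+L_m\tilde{\omega}=:v+z .
\]
Substituting (\ref{u2}) and $J=GJ_n$ into (\ref{dyn}) gives
\[
\dot{x}=F_1\nabla_x H+G(G_\omega^\top-J_n)(v+z).
\]
Likewise, subtracting the $\omega$-dynamics from (\ref{om2}) gives
\[
\dot{\tilde{\omega}}=F_2(v+z)+(J_n^\top-G_\omega)G^\top\nabla_x H .
\]

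\textbf{Role of $\Lambda$.} The invertibility assumption makes $\Lambda$ a left inverse: $\Lambda G(G_\omega^\top-J_n)=I_{n'}$. Hence $\Lambda\dot{x}=\Lambda F_1\nabla_x H+v+z$, so the unmeasured quantity $v+z$ is available through $\dot{x}$. The auxiliary state $x_a$ in (\ref{l}) is the standard device that implements a gradient law containing $\dot{x}$ without differentiating $x$. Differentiating (\ref{l}), the terms $\dot{\hat{\Omega}}_m^\top F_2^\top\Lambda x$ and $\hat{\Omega}_m^\top\dot{F}_2^\top\Lambda x$ cancel against those in $\dot{x}_a$, and $\hat{\Omega}_m^\top F_2^\top\Lambda F_1\nabla_x H$ cancels the drift part of $\Lambda\dot{x}$. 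Since $\dot{l}=0$, this should leave
\[
\dot{\tilde{l}}=-\hat{\Omega}_m^\top F_2^\top(v+z)-\hat{\Omega}_m^\top(G_\omega-J_n^\top)G^\top\nabla_x H .
\]
I will verify this carefully, including a term $\hat{\Omega}_m^\top F_2^\top\dot{\Lambda}x$ that arises if $G$, and hence $\Lambda$, depends on $x$. That term must either be absorbed, for instance by treating $\Lambda$ as constant or adding it to $\dot{x}_a$, or be shown to vanish under the standing assumptions.

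\textbf{Lyapunov derivative.} I will then compute $\dot{W}$. The coupling terms should cancel pairwise, exactly as the off-diagonal blocks in (\ref{cl}) and (\ref{cl1}) are skew. The term $(\nabla_x H)^\top G(G_\omega^\top-J_n)(v+z)$ from $\dot{H}_1$ is offset by $z^\top(J_n^\top-G_\omega)G^\top\nabla_x H$ from the $\tilde{\omega}$-part and by $-v^\top(G_\omega-J_n^\top)G^\top\nabla_x H$ from the $\tilde{l}$-part. What remains is a quadratic form in $(v,z)$ built from $F_2$:
\[
z^\top F_2(v+z)-v^\top F_2^\top(v+z).
\]
The aim is for this to reduce to a nonpositive expression involving $R_2$, so that
\[
\dot{W}\le-(\nabla_x H)^\top R_1\nabla_x H .
\]
Boundedness of $(x,\tilde{\omega},\tilde{l})$ and convergence $x\to x^*$ then follow from Assumption~\ref{as1}, via the same LaSalle/Barbalat argument used after (\ref{ly}).

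\textbf{Main obstacle.} I expect the hardest step to be the sign bookkeeping in this $F_2$ quadratic form. A naive evaluation gives $-|z|_{R_2}^2+|v|_{R_2}^2$, plus cross terms that cancel by skew-symmetry. The $+|v|^2_{R_2}$ term has the wrong sign, which suggests the adaptation gain on the $F_2^\top$ term must be oriented so that $\tilde{l}$ contributes $-|v|_{R_2}^2$. My first attempt will be to recheck the sign convention of $\tilde{*}$ in the derivative of (\ref{l}). If that does not resolve it, I will rewrite $-v^\top F_2^\top(v+z)$ as $-(v+z)^\top F_2(v+z)+z^\top F_2(v+z)$, which would require replacing $W$ by a cross-weighted function (for example, with $\tfrac{1}{2}\|\tilde{\omega}\|_{L_m}^2$ replaced by a term in $v+z$) so that $\dot{W}$ contains $-|v+z|^2_{R_2}$. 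Should neither work, I will fall back to showing $\dot{W}\le-(\nabla_x H)^\top R_1\nabla_x H$ directly by absorbing the indefinite term with $R_2\succeq0$ after the cancellations.
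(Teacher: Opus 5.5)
\textbf{Gap: the key inequality is never established.} Your plan stops at exactly the step that matters. You never obtain $\dot W\le-|\nabla_xH|_{R_1}^2$, and none of your three fallbacks can produce it. Your bookkeeping is correct. With (\ref{l}) one gets
\[
\dot{\tilde l}=-\hat\Omega_m^\top(G_\omega G^\top-J^\top)\nabla_xH-\hat\Omega_m^\top F_2^\top(z+v),
\]
plus an uncancelled $-\hat\Omega_m^\top F_2^\top\dot\Lambda x$ when $G$ depends on $x$; the paper's $\dot x_a$ omits this term as well. Hence
\[
\dot W=-|\nabla_xH|_{R_1}^2-|z|_{R_2}^2+|v|_{R_2}^2 .
\]
The paper's proof derives the same $\dot{\tilde l}$, including the term $-\hat\Omega_m^\top F_2^\top\hat\Omega_m\nabla_{\tilde l}\bar H$. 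It then writes the $(3,3)$ block of $\mathfrak F$, and the last term of (\ref{dot}), as $+\hat\Omega_m^\top F_2^\top\hat\Omega_m$. That sign slip is what makes (\ref{dot}) nonpositive, so the paper does not supply the idea you are missing; it has the same hole. Your fallbacks do not close it:
\begin{itemize}
\item Rechecking the $\tilde{*}$ convention will not help, because your signs are right.
\item $R_2\succeq0$ cannot absorb $+|v|_{R_2}^2$, since $z=L_m\tilde\omega$ and $v=\hat\Omega_m\tilde l$ vary independently.
\end{itemize}

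\textbf{The conclusion is false for the stated estimator when $R_2\neq0$.} So no reweighted Lyapunov function can rescue it. Take $n=n'=m=1$, $G=G_\omega=1$, $J_n=0$, $F_1=-1$, $H_1=\tfrac12x^2$, $F_2=-1$, $L_m=L>0$, and $\omega(0)=0$. Then $\omega\equiv0$ and $d\equiv0$, $\Lambda=1$, $\hat\Omega_m=\hat\omega$, and (\ref{con2}) gives
\[
\dot x=-x+\hat l\hat\omega,\qquad \dot{\hat\omega}=-\hat l\hat\omega-x,\qquad \dot{\hat l}=\hat\omega(\hat l\hat\omega-x).
\]
The set $\{\hat l\le-2,\ \hat\omega>0,\ -\hat\omega\le x\le0\}$ is forward invariant. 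On it $\dot{\hat l}\le-\hat\omega^2$, $\dot{\hat\omega}\ge2\hat\omega$, $\frac{d}{dt}(x+\hat\omega)=-2x\ge0$, and $\dot x=\hat l\hat\omega<0$ at $x=0$. On this set $\dot x\le-\hat\omega\le-e^{2t}$, so the initialization $(x,\hat\omega,\hat l)(0)=(0,1,-2)$ gives $x\to-\infty$, with no disturbance present at all.

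\textbf{What does work.}
\begin{itemize}
\item Restrict to $R_2\hat\Omega_m\equiv0$ (e.g.\ $R_2=0$). There your computation already closes.
\item Alternatively, change the adaptation law to $\hat l=x_a+\hat\Omega_m^\top F_2\Lambda x$. Let $\dot x_a$ cancel the $\dot{\hat\Omega}_m$, $\dot F_2$, $\dot\Lambda$ and $\Lambda F_1\nabla_xH$ terms, and contribute $-\hat\Omega_m^\top(G_\omega-J_n^\top)G^\top\nabla_xH$. Then
\[
\dot{\tilde l}=-\hat\Omega_m^\top(G_\omega G^\top-J^\top)\nabla_xH+\hat\Omega_m^\top F_2(z+v),
\qquad
\dot W=-|\nabla_xH|_{R_1}^2-|z+v|_{R_2}^2 .
\]
The $(\tilde\omega,\tilde l)$ block becomes $\begin{bmatrix}I\\ \hat\Omega_m^\top\end{bmatrix}F_2\begin{bmatrix}I&\hat\Omega_m\end{bmatrix}$. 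This realizes your ``cross-weighted'' idea through the estimator rather than through $W$.
\end{itemize}
A pure sign flip that keeps $F_2^\top$ is not enough: it leaves $-|z+v|_{R_2}^2+2z^\top J_2v$, which is indefinite unless $J_2=0$.
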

 Note that invertibility of $(G_\omega-J_n^\top) (G_\omega ^\top -J_n)$ is equivalent with $n'\leq m$ since the matrices are full rank ($G_\omega\neq J_n^\top$).
 
 \proof
 First, we show that the closed-loop dynamics can be written as
 \begin{align}\label{cl2}
 	\begin{bmatrix}
 		\dot{x}  \\ \dot{\tilde{\omega}} \\ \dot{\tilde{l}}
 	\end{bmatrix}
 	=
 \mathfrak{F}(x)
 	\begin{bmatrix}
 		\nabla_x \bar{H} \\  \nabla_{\tilde{\omega}}\bar{H} \\ \nabla_{\tilde{l}}\bar{H}
 	\end{bmatrix},
 \end{align}
 where the augmented Hamiltonian is defined as
 \begin{align*}
 	\bar{H}(x,\tilde{\omega},\tilde{l})=H_1(x)+\frac{1}{2}\|\tilde{\omega}\|_{L_m}^2+\frac{1}{2}\|\tilde{l}\|^2,
 \end{align*}
 with $\tilde{l}=\hat{l}-l$,
  and $\mathfrak{F}(x)$ is
 \begin{align*}
 	\begin{bmatrix}
 		F_1 & GG_\omega^\top-J & (GG_\omega^\top-J )\hat{\Omega}_m\\
 		J^\top-G_\omega G^\top & F_2 & F_2\hat{\Omega} _m \\
 		\hat{\Omega}_m^\top (J^\top-G_\omega G^\top) & -\hat{\Omega}_m^\top F_2^\top & \hat{\Omega}_m^\top F_2^\top \hat{\Omega}_m
 	\end{bmatrix}.
 \end{align*}
 By defining $\tilde{L}_m=\hat{L}_m-L_m$,
the dynamics of $x$ obtained from (\ref{u2}) are given by
 \begin{align*}
 	\dot{x}&=F_1\nabla_x H +GJ_nL_m \omega +GG_\omega^\top \hat{L}_m\hat{\omega}-GJ_n \hat{L}_m\hat{\omega}\\&-GG_\omega^\top L_m\omega=
 	F_1	\nabla_x \bar{H}- GJ_n L_m\tilde{\omega}+GG_\omega^\top L_m\tilde{\omega}\\&- GJ_n\tilde{L}_m\hat{\omega}+GG_\omega^\top \tilde{L}_m\hat{\omega}
 	=F_1\nabla_x \bar{H}+(GG_\omega^\top-J) 
 	\nabla_{\tilde{\omega}}\bar{H}\\&+ (GG_\omega^\top-J )\hat{\Omega}_m\nabla_{\tilde{l}}\bar{H},
 \end{align*}
where $\pm (GG_\omega^\top-GJ_n)L_m\hat{\omega}$ is added to obtain the second equality, and the last equality follows from
$\tilde{L}_m\hat{\omega}=\hat{\Omega}_m\tilde{l}$. Using (\ref{om2}), the estimation error dynamics are obtained as
 \begin{align*}
 	\dot{\tilde{\omega}}&= F_2 \hat{L}_m\hat{\omega}-G_\omega G^\top\nabla_x H +J^\top\nabla_x H-F_2 L_m\omega=(J^\top\\&-G_\omega G^\top)\nabla_x H+F_2L_m\tilde{\omega}+F_2\tilde{L}_m\hat{\omega}=(J^\top-G_\omega G^\top)\nabla_x \bar{H}\\&+F_2\nabla_{\tilde{\omega}}\bar{H}+F_2\hat{\Omega}_m\nabla_{\tilde{l}}\bar{H},
 \end{align*}
 which is equal to second row of (\ref{cl2}). The dynamics of $\tilde{l}$ are obtained by differentiating (\ref{l}), which yields
 \begin{align*}
\dot{\tilde{l}}&=\dot{\hat{\Omega}}_m^\top F_2 ^\top \Lambda x+{\hat{\Omega}}_m^\top \dot{F}_2 ^\top \Lambda x
-\hat{\Omega}^\top_m (G_\omega G^\top-J_n^\top G^\top)\nabla_x H \\&+\hat{\Omega}^\top_m F_2^\top\Lambda F_1\nabla_x H 	-\dot{\hat{\Omega}}_m^\top F_2 ^\top \Lambda x
-{\hat{\Omega}}_m^\top \dot{F}_2 ^\top \Lambda x
-\hat{\Omega}_m^\top F_2 ^\top \Lambda \\&\big(F_1\nabla_x \bar{H}+(GG_\omega^\top-J) 
\nabla_{\tilde{\omega}}\bar{H}+ (GG_\omega^\top-J )\hat{\Omega}_m\nabla_{\tilde{l}}\bar{H}\big)=\\&-\hat{\Omega}^\top_m (G_\omega G^\top-J^\top )\nabla_x \bar{H}-\hat{\Omega}_m^\top F_2 ^\top\nabla_{\tilde{\omega}}\bar{H}-\hat{\Omega}_m^\top F_2 ^\top\hat{\Omega}_m\nabla_{\tilde{l}}\bar{H}.
 \end{align*} 
The time derivative of $\bar{H}$ along the closed-loop trajectories is given by
\begin{align}
	\dot{\bar{H}}
	&=
	(\nabla_x\bar{H})^\top F_1\nabla_x\bar{H}
	+
	(\nabla_{\tilde{\omega}}\bar{H})^\top
	F_2\nabla_{\tilde{\omega}}\bar{H}
	\nonumber\\
	&+
	(\nabla_{\tilde{l}}\bar{H})^\top
	\hat{\Omega}_m^\top F_2^\top\hat{\Omega}_m
	\nabla_{\tilde{l}}\bar{H}
\leq
	-(\nabla_x\bar{H})^\top R_1\nabla_x\bar{H}.
	\label{dot}
\end{align}
Therefore, by \textit{Assumption~\ref{as1}}, the state $x$ asymptotically converges to the desired equilibrium point $x^*$.$\hfill\blacksquare$
 
From (\ref{dot}), it follows that, in addition to $x-x^*$, the signals
$R_2L_m\tilde{\omega}$ and $R_2\hat{\Omega}_m\tilde{l}$ converge to zero.
Thus, if $R_2L_m$ and $R_2\hat{\Omega}_m$ are full column rank in a
neighborhood of $x^*$, then $\tilde{\omega}\to0$ and $\tilde{l}\to0$.
Otherwise, the estimation errors may converge to the corresponding null
spaces, i.e.,
\[
\tilde{\omega}\in\operatorname{null}(R_2L_m),
\qquad
\tilde{l}\in\operatorname{null}(R_2\hat{\Omega}_m),
\]
and convergence of the estimation errors to zero may be established by
applying LaSalle's invariance principle.
 
In \textit{Theorem~\ref{th2}}, the invertibility of
$(G_\omega-J_n^\top)(G_\omega^\top-J_n)$ was assumed, which implies
$n'\leq m$. This assumption may restrict the applicability of the proposed
approach. Moreover, in some cases, damping can be injected into the
dynamics of $\tilde{\omega}$. Motivated by these considerations, an
alternative control law is proposed under the following assumption.
\begin{asum}\label{as3}
Part \textbf{b)} of \textit{Assumption~\ref{as2}} holds. Furthermore, 
	 	\begin{subequations}\label{fer}
		\begin{align}
			& G_\omega  G_\omega ^\top F_2\hat{\Omega}_m=F_2\hat{\Omega}_m\label{fe}\\
			&G_\omega^\top \hat{\Omega}_m=0\label{gom}\\
			& J=G^{\bot^\top}\Gamma\label{j}\\
			&\Gamma\hat{\Omega}_m=0,\label{ga}
		\end{align}
	\end{subequations}
	where $G^\bot$ is left kernel of $G$, and $\Gamma(x)\in\mathbb{R}^{n-m\times n'}$.
	The interpretation and practical relevance of the structural conditions in \eqref{fer} are discussed following Theorem~\ref{th3}.\hfill $\square$
\end{asum}
The next result provides an alternative design that avoids the dimensional restriction introduced by Theorem 2, at the expense of imposing different structural conditions on the disturbance dynamics introduced in (\ref{fer}).
 \begin{theorem}\label{th3}
Consider system~(\ref{dyn}) and suppose that Assumptions~\ref{as1} and~\ref{as3} hold.
The control law and the estimators are given by
 	 		\begin{subequations}\label{con3}
	\begin{align}
		u&=G_\omega^\top \hat{L}_m\hat{\omega}-J_n 
		\hat{L}_m\hat{\omega} \label{u3}\\
		{\hat{\omega}}&=x_a-G_\omega G^\top K x  \label{om3}\\
		\dot{x}_a&= F_2 \hat{L}_m\hat{\omega}-G_\omega G^\top\nabla_x H+G_\omega G^\top KF_1\nabla_x H\nonumber\\&+G_\omega \dot{G}^\top K x,\nonumber \\
		\hat{l}&=y_a-\hat{\Omega}_m^\top  F_2^\top G_\omega (G^\top G)^{-1}G^\top x  \label{l1} \\
		\dot{y}_a&=\hat{\Omega}_m^\top F_2^\top G_\omega  d\{(G^\top G)^{-1}G^\top\}/dt\; x
		-\hat{\Omega}^\top_m (G_\omega G^\top\nonumber\\&-J_n^\top G^\top)\nabla_x H +\dot{\hat{\Omega}}_m^\top F_2^\top G_\omega (G^\top G)^{-1}G^\top x+{\hat{\Omega}}_m^\top\nonumber\\&  F_2^\top G_\omega (G^\top G)^{-1}G^\top F_1\nabla_x H+\hat{\Omega}_m^\top  \dot{F}_2^\top G_\omega (G^\top G)^{-1}G^\top x,\nonumber
	\end{align}
\end{subequations}
where $x_a$ and $y_a$ are auxiliary variables. Then, the state $x$ converges to $x^*$ if
	\begin{align}
	R_2-\operatorname{sym}\left\{G_\omega G^\top K (G^{\bot^\top}\Gamma-G G_\omega^\top)\right\}\succeq0.\label{kk}
\end{align}
  $\hfill\square$
 \end{theorem}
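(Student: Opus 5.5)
The argument mirrors the proofs of Theorem~\ref{th1} and Theorem~\ref{th2}. We write the closed loop in the error coordinates $(x,\tilde{\omega},\tilde{l})$, with $\tilde{\omega}=\hat{\omega}-\omega$ and $\tilde{l}=\hat{l}-l$, as a port-Hamiltonian system
\begin{align*}
\begin{bmatrix}\dot{x}\\ \dot{\tilde{\omega}}\\ \dot{\tilde{l}}\end{bmatrix}=\mathfrak{F}(x)\begin{bmatrix}\nabla_x\bar{H}\\ \nabla_{\tilde{\omega}}\bar{H}\\ \nabla_{\tilde{l}}\bar{H}\end{bmatrix},
\end{align*}
with augmented Hamiltonian $\bar{H}=H_1(x)+\frac{1}{2}\|\tilde{\omega}\|_{L_m}^2+\frac{1}{2}\|\tilde{l}\|^2$. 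We then show that the symmetric part of $-\mathfrak{F}$, restricted to the $(\tilde{\omega},\tilde{l})$ block, is positive semidefinite under (\ref{kk}) and Assumption~\ref{as3}. Throughout we use $J=G^{\bot^\top}\Gamma$, $\tilde{L}_m\hat{\omega}=\hat{\Omega}_m\tilde{l}$, and $G^{\bot}G=0$.

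\textbf{Step 1 (plant row).} Substitute (\ref{u3}) into (\ref{dyn}) and add and subtract $(GG_\omega^\top-GJ_n)L_m\hat{\omega}$, exactly as in Theorem~\ref{th2}. This gives
\begin{align*}
\dot{x}=F_1\nabla_x\bar{H}+(GG_\omega^\top-GJ_n)\nabla_{\tilde{\omega}}\bar{H}+(GG_\omega^\top-GJ_n)\hat{\Omega}_m\nabla_{\tilde{l}}\bar{H}+J\nabla_\omega H.
\end{align*}
The interconnection term $J L_m\omega$ is not cancelled by the input, since it lies in $\mathcal{R}(G^{\bot^\top})$. We therefore rewrite it as $J L_m\omega=J L_m\hat{\omega}-J\nabla_{\tilde{\omega}}\bar{H}$. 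The piece $-J\nabla_{\tilde{\omega}}\bar{H}$ produces the coupling entry $GG_\omega^\top-J$ that appears in (\ref{kk}). The remaining known term $JL_m\hat{\omega}$ is handled in one of two ways. The first option is to interpret $J_n$ in (\ref{u3}) through this structure. The second, which I expect to be the intended reading, is to use (\ref{ga}) together with $G_\omega^\top\hat{\Omega}_m=0$ to show that $J\hat{\Omega}_m=0$, so that the $\tilde{l}$-column of the plant row reduces to $GG_\omega^\top\hat{\Omega}_m-GJ_n\hat{\Omega}_m$. In that case $G^\top$-projections remain the only channels through which $x$ feeds the estimators.

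\textbf{Step 2 (estimator rows).} Differentiate (\ref{om3}) and use the $x_a$ dynamics. The terms $G_\omega\dot{G}^\top Kx$ and $G_\omega G^\top KF_1\nabla_xH$ cancel against the expansion of $G_\omega G^\top K\dot{x}$, as in Theorem~\ref{th1}. What remains is
\begin{align*}
\dot{\tilde{\omega}}=(J^\top-G_\omega G^\top)\nabla_x\bar{H}+(F_2-F_\omega)\nabla_{\tilde{\omega}}\bar{H}+(F_2-F_\omega)\hat{\Omega}_m\nabla_{\tilde{l}}\bar{H},
\end{align*}
where $F_\omega=G_\omega G^\top K(GG_\omega^\top-J)$. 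For $\tilde{l}$, differentiate (\ref{l1}) with the $y_a$ dynamics. All derivative terms ($\dot{\hat{\Omega}}_m$, $\dot{F}_2$, and the derivative of $(G^\top G)^{-1}G^\top$) cancel, leaving $-\hat{\Omega}_m^\top F_2^\top G_\omega (G^\top G)^{-1}G^\top$ applied to the $\tilde{\omega}$ and $\tilde{l}$ parts of $\dot{x}$. Since $(G^\top G)^{-1}G^\top G=I$ and $(G^\top G)^{-1}G^\top G^{\bot^\top}=0$, the $J$-contributions vanish. The remaining factor $\hat{\Omega}_m^\top F_2^\top G_\omega G_\omega^\top$ reduces to $\hat{\Omega}_m^\top F_2^\top$ by (\ref{fe}). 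The $J_n$ contributions are killed by $G_\omega^\top\hat{\Omega}_m=0$ combined with (\ref{fe}) after transposition. The result is
\begin{align*}
\dot{\tilde{l}}=\hat{\Omega}_m^\top(J^\top-G_\omega G^\top)\nabla_x\bar{H}-\hat{\Omega}_m^\top F_2^\top\nabla_{\tilde{\omega}}\bar{H}-\hat{\Omega}_m^\top F_2^\top\hat{\Omega}_m\nabla_{\tilde{l}}\bar{H}.
\end{align*}

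\textbf{Step 3 (Lyapunov).} Compute $\dot{\bar{H}}$. The $x$ cross terms cancel pairwise by skew-symmetry. The $\tilde{\omega}$–$\tilde{l}$ cross terms must also cancel. Here $F_\omega\hat{\Omega}_m$ contains $G_\omega G^\top K(GG_\omega^\top-J)\hat{\Omega}_m$, which vanishes because $G_\omega^\top\hat{\Omega}_m=0$ and $\Gamma\hat{\Omega}_m=0$. This is exactly why (\ref{gom}) and (\ref{ga}) are imposed. The surviving diagonal terms are $-|\nabla_{\tilde{\omega}}\bar{H}|^2_{R_2-\operatorname{sym}(\cdot)}$, which is nonpositive by (\ref{kk}), and $-|\hat{\Omega}_m\nabla_{\tilde{l}}\bar{H}|^2_{R_2}$. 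This gives $\dot{\bar{H}}\le-(\nabla_x\bar{H})^\top R_1\nabla_x\bar{H}$, and Assumption~\ref{as1} yields $x\to x^*$ as in Theorem~\ref{th1}.

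The main obstacle is the third row, namely verifying that every cross coupling cancels using (\ref{fe})–(\ref{ga}). There are two delicate points: the uncancelled interconnection $JL_m\omega$ in the plant row, and the $\tilde{l}$ column of $F_\omega$. I would attack these first by writing out every $\hat{\Omega}_m$-multiplied term explicitly.
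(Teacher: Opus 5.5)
Your overall route (error coordinates $(x,\tilde{\omega},\tilde{l})$, the augmented $\bar H$, a PH form, then $\dot{\bar H}$) matches the paper's. However, Step 1 has a genuine gap, which you flag but do not close.

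Carry out Step 1 correctly. Use (\ref{ga}) to write $JL_m\omega=J\hat L_m\hat\omega-J\nabla_{\tilde\omega}\bar H$, and use (\ref{gom}) to drop $GG_\omega^\top\hat\Omega_m$. The plant row is then
\[
\dot x=F_1\nabla_x\bar H+(GG_\omega^\top-J)\nabla_{\tilde\omega}\bar H+r,\qquad r:=(J-GJ_n)\hat L_m\hat\omega .
\]
Your version drops the term $-GJ_nL_m\omega$. Your rewrite therefore actually gives the $\tilde\omega$-column $GG_\omega^\top-GJ_n-J$, which matches (\ref{kk}) only when $J_n=0$.

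The residual $r$ is the problem:
\begin{itemize}
\item The $GJ_n$ part of $r$ disappears only if $J_n=0$.
\item The part $J\hat L_m\hat\omega=G^{\bot^\top}\Gamma\hat L_m\hat\omega$ cannot be removed by any input, because $\mathcal R(G)\cap\mathcal R(G^{\bot^\top})=\{0\}$. This is exactly why your first option fails.
\item Your second option concerns only the $\tilde l$-column and leaves $r$ untouched.
\end{itemize}

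With $r\neq0$, the closed loop is not of the form (\ref{cl2}). The residual reappears in $\dot{\tilde\omega}$ through $-G_\omega G^\top K\dot x$. As a result, $\dot{\bar H}$ picks up $(\nabla_x\bar H)^\top r-(\nabla_{\tilde\omega}\bar H)^\top G_\omega G^\top K r$. This term is sign-indefinite and driven by $\hat\omega=\omega+\tilde\omega$, so Step 3 does not follow.

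Step 2 has a related error in the $\tilde l$ row. The term $-\hat\Omega_m^\top F_2^\top G_\omega(G^\top G)^{-1}G^\top\dot x$ contains $\hat\Omega_m^\top F_2^\top G_\omega J_n(\cdot)$, and (\ref{gom}) with (\ref{fe}) does not kill it. Condition (\ref{gom}) gives $\hat\Omega_m^\top G_\omega=0$, not $\hat\Omega_m^\top F_2^\top G_\omega=0$. The latter, combined with (\ref{fe}), would force $F_2\hat\Omega_m=0$.

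The paper's proof does not resolve this point either. It asserts that the $x$-row is obtained ``in the same manner as in Theorem~\ref{th2}'', which tacitly uses $J=GJ_n$. In the $\tilde l$ row it likewise replaces $J_n^\top G^\top$ by $J^\top$. Since $G^\top G^{\bot^\top}=(G^\bot G)^\top=0$ and $G$ has full column rank, requiring $GJ_n=G^{\bot^\top}\Gamma$ together with (\ref{j}) forces $J_n=0$ and $J=0$.

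In that case, for example $\Gamma=0$ as in the sinusoidal example, the argument closes:
\begin{itemize}
\item $r=0$, and your Steps 2--3 go through.
\item The $(\tilde l,x)$ and $(\tilde l,\tilde l)$ entries you retain actually vanish by (\ref{gom}) and (\ref{fe}).
\item You recover exactly the matrix (\ref{f}) and the paper's bound on $\dot{\bar H}$.
\end{itemize}

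For $\Gamma\neq0$, the term $G^{\bot^\top}\Gamma L_m\omega$ is an unmatched disturbance. Proving $x\to x^*$ in that case would require an additional idea or hypothesis, which neither your proposal nor the paper provides.
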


\proof
To prove Theorem~\ref{th3}, it suffices to show that the closed-loop dynamics can be written in the form of (\ref{cl2}), with
\begin{align}\label{f}
	\mathfrak{F}(x)=
	\begin{bmatrix}
		F_1 & GG_\omega^\top-J & 0\\
		J^\top-G_\omega G^\top & F_2-F_\omega & F_2\hat{\Omega}_m\\
		0 & -\hat{\Omega}_m^\top F_2^\top & 0
	\end{bmatrix},
\end{align}
where $F_\omega$ was defined in (\ref{r}), with $J=G^{\bot^\top}\Gamma$.
Since the control law in (\ref{u3}) is identical to that in (\ref{u2}),
the dynamics of $x$ is obtained in the same manner as in
Theorem~\ref{th2}. Note that due to (\ref{fer}), the term $(GG_\omega^\top-J )\hat{\Omega}_m$ vanishes. To derive the dynamics of $\tilde{\omega}$, differentiating (\ref{om3}) yields
  \begin{align*}
 	\dot{\tilde{\omega}}&= F_2 \hat{L}_m\hat{\omega}-G_\omega G^\top\nabla_x H+G_\omega G^\top KF_1\nabla_x H+G_\omega \dot{G}^\top K x\\&-G_\omega \dot{G}^\top K x-G_\omega {G}^\top K\big(F_1\nabla_x \bar{H}+(GG_\omega^\top-J) 
 	\nabla_{\tilde{\omega}}\bar{H}+\\& (GG_\omega^\top-J )\hat{\Omega}_m\nabla_{\tilde{l}}\bar{H}\big) +J^\top\nabla_x H-F_2 L_m\omega=(J^\top-G_\omega G^\top)\\&\nabla_x H+(F_2-G_\omega G^\top K (G G_\omega^\top-J))L_m\tilde{\omega}+(F_2-G_\omega G^\top K \\&(G G_\omega^\top-J))\tilde{L}\hat{\omega}=(J^\top-G_\omega G^\top)\nabla_x \bar{H}+(F_2-F_\omega)\nabla_{\tilde{\omega}}\bar{H}\\&+(F_2-F_\omega)\hat{\Omega}_m\nabla_{\tilde{l}}\bar{H}=(J^\top-G_\omega G^\top)\nabla_x \bar{H}+(F_2-F_\omega)\\&\nabla_{\tilde{\omega}}\bar{H}+F_2\hat{\Omega}_m\nabla_{\tilde{l}}\bar{H},
 \end{align*}
 where (\ref{gom}), (\ref{j}), and (\ref{ga}) are used in the last equality. The dynamics of $\tilde{l}$ is obtained by differentiating (\ref{l1}), which gives
 \begin{align*}
 	\dot{\tilde{l}}&=\hat{\Omega}_m^\top  F_2^\top G_\omega  d\{(G^\top G)^{-1}G^\top\}/dt\; x
 	-\hat{\Omega}^\top_m (G_\omega G^\top-J^\top )\\&\nabla_x H +\dot{\hat{\Omega}}_m^\top \Xi x+{\hat{\Omega}}_m^\top \Xi F_1\nabla_x H +\hat{\Omega}_m^\top  \dot{F}_2^\top G_\omega (G^\top G)^{-1}G^\top x\\&
 	-\dot{\hat{\Omega}}_m^\top \Xi x -\hat{\Omega}_m^\top   F_2^\top G_\omega  d\{(G^\top G)^{-1}G^\top\}/dt\; x
 	-\hat{\Omega}_m^\top  \dot{F}_2^\top G_\omega\\&  (G^\top G)^{-1}G^\top x
 	-\hat{\Omega}_m^\top\Xi \big(F_1\nabla_x \bar{H}+(GG_\omega^\top-J) 
 	\nabla_{\tilde{\omega}}\bar{H}\\&+ (GG_\omega^\top-J )\hat{\Omega}_m\nabla_{\tilde{l}}\bar{H}\big)
 	=  -\hat{\Omega}^\top_m F_2^\top G_\omega G_\omega^\top\nabla_{\tilde{\omega}} \bar{H}
 	\\&-\hat{\Omega}^\top_m F_2^\top G_\omega G_\omega^\top\hat{\Omega}_m\nabla_{\tilde{l}}\bar{H}
 \end{align*}
 where
 $$\Xi= F_2^\top G_\omega (G^\top G)^{-1}G^\top,$$
 and (\ref{fer}) is used to obtain the last equality. Using (\ref{fe}) and (\ref{gom}), the above expression simplifies to
 \begin{align*}
 \dot{\tilde{l}}= -\hat{\Omega}^\top_m F_2^\top \nabla_{\tilde{\omega}} \bar{H},
 \end{align*}
 which confirms (\ref{f}).Taking $\bar{H}$ as a Lyapunov function, its time derivative satisfies
\begin{align*}
	\dot{\bar{H}}
	&=(\nabla_x\bar{H})^\top F_1\nabla_x\bar{H}
	+(\nabla_{\tilde{\omega}}\bar{H})^\top
	(F_2-F_\omega)\nabla_{\tilde{\omega}}\bar{H}\\
	&\leq-(\nabla_x\bar{H})^\top R_1\nabla_x\bar{H},
\end{align*}
where the inequality follows from (\ref{kk}). Hence, asymptotic stability of
$x^*$ follows from Assumption~\ref{as1}.
\hfill$\blacksquare$

\textbf{Discussion:} In Theorem~\ref{th3}, a robust control law was proposed to reject a class of
matched disturbances with an unknown storage function, acting on a previously stabilized PH system. Now let us focus on conditions reported in (\ref{fer}). Let us first consider condition (\ref{fe}). Assume that the columns of
$G_\omega$ are orthonormal, i.e., $G_\omega^\top G_\omega=I_m$.
Then, $G_\omega G_\omega^\top$ is the orthogonal projector onto the range
space of $G_\omega$, denoted by $\mathcal{R}(G_\omega)$. Hence,
condition (\ref{fe}) holds if and only if
\[
\mathcal{R}(F_2\hat{\Omega}_m)\subseteq\mathcal{R}(G_\omega).
\]
Moreover, if $F_2$ is nonsingular, then
$\operatorname{rank}(F_2\hat{\Omega}_m)=\operatorname{rank}(\hat{\Omega}_m),$
which implies the necessary condition
$\operatorname{rank}(\hat{\Omega}_m)\le m$ 
since $m<n'$. Furthermore, from (\ref{gom}) it is deduced that $\operatorname{rank}(\hat{\Omega}_m)\le n'-m$. Hence,
$$\operatorname{rank}(\hat{\Omega}_m)\le \min\{n'-m,m\}.$$
Besides, condition (\ref{ga}) is satisfied by various choices of
$\Gamma$, including $\Gamma=0$ and
$\Gamma=\Psi G_\omega^\top$, where
$\Psi\in\mathbb{R}^{n-m\times m}$ is arbitrary.
A particular example satisfying Assumption~\ref{as3} is a sinusoidal
disturbance with unknown frequency, as studied in~\cite{ferguson2020matched} with $G_\omega^\top=[I_m,0], \Gamma=0$ and $n'=2m$.
This example demonstrates that \textit{Assumption~\ref{as3}} is not merely an abstract condition.
\begin{remark}\label{re1}
	The results presented in this paper provide a general framework for robust rejection of matched disturbances in port-Hamiltonian systems with coupled disturbance dynamics. In \textit{Theorem~\ref{th1}}, the disturbance is allowed to interact with the plant through both the input port and the interconnection structure, while the disturbance dynamics are not restricted to the special cases considered in previous studies. In particular, the disturbance state $\omega$ is not required to be bounded or to converge to zero and may exhibit unbounded trajectories, depending on the structure of $\mathfrak{R}$. Moreover, the proposed framework does not require the disturbance storage matrix $L_m$ to be known. This restriction is subsequently relaxed in Theorems~\ref{th2} and~\ref{th3}, where online estimation mechanisms are introduced to handle an unknown $L_m$ under different structural conditions on the disturbance dynamics.
	In comparison with~\cite{ferguson2019matched,ferguson2020matched}, the proposed framework considers a substantially more general class of disturbance dynamics. In~\cite{ferguson2020matched}, the disturbance dynamics are considered under the restrictions $J=0$ and a purely skew-symmetric $F_2$, and the subsequent sinusoidal disturbance model represents a particular special case with a more restrictive structure for the matrices corresponding to $L_m,F_2$ and $G_\omega$. In contrast, the present work allows a nonzero interconnection matrix $J$ and a general $F_2=J_2-R_2$, thereby accommodating both interconnection and damping effects in the disturbance dynamics. Furthermore, the matrix $L_m$ is not restricted to a diagonal structure and can be an unknown symmetric matrix, which substantially enlarges the class of admissible disturbance models.
	{Therefore, the main contribution of this work is not limited to extending a specific disturbance model or a particular sinusoidal case. Rather, it establishes a unified robust control framework comprising complementary control designs for a broader class of matched disturbances generated by coupled PH dynamics, while allowing unknown disturbance storage parameters and, in appropriate cases, preserving a PH representation of the closed-loop system.}
	\hfill$\square$
\end{remark}

\appendix
\section*{Parametrization of a Symmetric Matrix}
\label{app:parametrization}

Consider a symmetric matrix $L_m=L_m^\top\in\mathbb{R}^{n'\times n'}$ and a vector $\omega\in\mathbb{R}^{n'}$. Since $L_m$ is symmetric, it contains $n'(n'+1)/2$ independent elements. Define the parameter vector $l\in\mathbb{R}^{n'(n'+1)/2}$ by stacking the elements of the upper triangular part of $L_m$ row-wise, i.e.,
\begin{align*}
	l=
[&
		L_{m_{1,1}},
		L_{m_{1,2}},
		\cdots,
		L_{m_{1,n'}},
		L_{m_{2,2}},\\&
		L_{m_{2,3}},
		\cdots,
		L_{m_{2,n'}},
		\cdots,
		L_{m_{n',n'}}
	]^{\top}.
\end{align*}
Accordingly, a matrix $\Omega_m(\omega)\in\mathbb{R}^{n'\times n'(n'+1)/2}$ can be constructed such that
\begin{align}
	L_m\omega=\Omega_m(\omega)l.
	\label{Om_def}
\end{align}
In particular, for $\omega=[\omega_1,\ldots,\omega_{n'}]^\top$, the $i$th row of $\Omega_m(\omega)$ contains the elements $\omega_1,\ldots,\omega_{i-1}$ in the positions corresponding to $L_{m_{1,i}},\ldots,L_{m_{i-1,i}}$, followed by $\omega_i,\ldots,\omega_{n'}$ in the positions corresponding to $L_{m_{i,i}},\ldots,L_{m_{i,n'}}$, with zeros elsewhere. This parametrization is used throughout the paper for online estimation of the unknown matrix $L_m$. For example, for $n'=3$, it yields
\begin{align*}
\Omega_m=	\begin{bmatrix}
		\omega_1 & \omega_2 & \omega_3 & 0 & 0 & 0 \\
		0 & \omega_1 & 0 & \omega_2 & \omega_3 & 0 \\
		0 & 0 & \omega_1 & 0 & \omega_2 & \omega_3
	\end{bmatrix}
\end{align*}
\bibliographystyle{ieeetr}
	\bibliography{ref}
	
\end{document}